\newtheorem{theorem}{Theorem}
\newtheorem{definition}{Definition}
\newtheorem{proposition}{Proposition}
\newtheorem{corollary}{Corollary}
\newenvironment{proof}{\paragraph{Proof:}}{\hfill$\square$}
\begin{document}

\title{Computing Equilibria with Partial Commitment\thanks{I dedicate this
    paper to my sister Jessica, her fianc\'e Jeremy, and their upcoming
    full commitment.  I wish them a lifetime of happiness.}}  \author{Vincent Conitzer}
\author{Vincent Conitzer\\Duke University, Durham, NC, USA}
\date{}
\maketitle
\begin{abstract}
  In security games, the solution concept commonly used is that of a
  Stackelberg equilibrium where the defender gets to commit to a mixed
  strategy.  The motivation for this is that the attacker can repeatedly
  observe the defender's actions and learn her distribution over actions,
  before acting himself.  If the actions were not observable, Nash (or
  perhaps correlated) equilibrium would arguably be a more natural solution
  concept.  But what if some, but not all, aspects of the defender's
  actions are observable?  In this paper, we introduce solution concepts
  corresponding to this case, both with and without correlation.  We study
  their basic properties, whether these solutions can be efficiently
  computed, and the impact of additional observability on the utility
  obtained.
\end{abstract}

\section{Introduction}

Algorithms for computing game-theoretic solutions have long been of
interest, but were for a long time not deployed in real-world applications
(at least if we do not count, e.g., computer poker programs---for an
overview of those, see~\citet{Sandholm10:State}---as real-world
applications).  This changed in 2007 with a series of deployed
applications coming out of Milind Tambe's TEAMCORE research group at the
University of Southern California.  The games in question are what are now
called security games, where a defender has to allocate limited resources
to defend certain targets or patrol a certain area, and an attacker chooses
a target to attack.  The deployed applications include airport
protection~\citep{Pita08:Deployed}, assigning Federal Air Marshals to
flights~\citep{Tsai09:IRIS}, patrolling in ports~\citep{An12:PROTECT}, fare
inspection in transit systems~\citep{Yin12:TRUSTSAIMAG}, and patrolling to
prevent wildlife poaching~\citep{Fang16:Deploying}.

While most of the literature on computing game-theoretic solutions has
focused on the computation of Nash equilibria---including the breakthrough
result that even computing a single Nash equilibrium is
PPAD-complete~\citep{Daskalakis09:Complexity,Chen09:Settling}---in the
security games applications the focus is instead on computing an optimal
mixed strategy to commit to~\citep{Conitzer06:Computing}.  In this model,
one player (in security games, the defender) chooses a mixed strategy, and
the other (the attacker) observes this mixed strategy and best-responds to
it.  This sometimes helps, and never hurts, the former
player~\citep{Stengel10:Leadership}.  Intriguingly, in two-player
normal-form games, such a strategy can be computed in polynomial time via
linear
programming~\citep{Conitzer06:Computing,Stengel10:Leadership}. Another
benefit of this model is that it sidesteps issues of equilibrium selection
that the approach of computing (say) a Nash equilibrium might face.

Such technical conveniences aside, the standard motivation for assuming
that the defender in security games can commit to a mixed strategy is as
follows. The defender has to choose a course of action every day. The
attacker, on the other hand, does not, and can observe the defender's
actions over a period of time. Thus, the defender can establish a
reputation for playing any particular mixed strategy. This can be
beneficial for the defender: whereas in a simultaneous-move model (say,
using Nash equilibrium as the solution concept), she can play only best
responses to the attacker's strategy, in the commitment model she can
commit to play something that is not a best response, which may incentivize
the attacker to play something that is better for the defender.  Of course,
for this argument to work, it is crucial that the attacker observes over
time which actions the defender takes before taking any action
himself. Previous work has questioned this and considered models where
there is uncertainty about whether the attacker observes the defender's
actions at all~\citep{Korzhyk11:Stackelberg,Korzhyk11:Solving}, as well as
models where the attacker only gets a limited number of
observations~\citep{Pita10:Robust,An12:Security}.

In this paper, we consider a different setting where some defender actions
are (externally) indistinguishable from each other.  This captures, for
example, the case where there are both observable and unobservable security
measures, as is often the case.  Here, two courses of action are
indistinguishable if and only if they differ only in the unobservable
component.  It also captures the case where a guard can be assigned to a
visible location (1), or to one of two invisible locations (2 or 3).  In
this case, the first action is distinguishable from the latter two, but the
latter two are indistinguishable from each other.  Indistinguishability is
an equivalence relation that partitions the player's strategy space; we
call one element of this partition a SIS (subset of indistinguishable
strategies).  Thus, the defender can establish a reputation for playing a
particular distribution over the SISes.  However, she cannot establish any
reputation for how she plays {\em within} each SIS, because this is not
externally observable. Thus, intuitively, when the defender plays from a
particular SIS, she needs to play a strategy that, within that SIS, is a
best response; however, if there is another strategy in a {\em different}
SIS that is a better response, that is not a problem, because deviating to
that strategy would be observable.

The specific contributions of this paper are as follows.  We formalize
solution concepts for these settings that generalize both Nash and
correlated equilibrium, as well as the basic Stackelberg model with (full)
commitment to mixed strategies.  Further contributions include illustrative
examples of these solutions, basic properties of the concepts, analysis of
their computational complexity, and analysis of how the row player
(defender)'s utility varies as a function of the amount of commitment power
(as measured by observability).

\section{Definitions and Basic Properties}

We are now ready to define some basic concepts.  Throughout, the row player
(player 1) is the player with (some) commitment power, in the sense of
being able to build a reputation. $R$ denotes the set of rows, $C$ the set
of columns, and $\sigma_1$ and $\sigma_2$ denote mixed strategies over
these, respectively.

\begin{definition}
  A {\em subset of indistinguishable strategies (SIS)} $S$ is a maximal
  subset of $R$ such that for any two rows $r_1,r_2 \in S$, the column
  player's observation is identical for $r_1$ and $r_2$.  Let $\mathcal{S}$
  denote the set of all SISes, constituting a partition of $R$.  Given a
  mixed strategy $\sigma_1$ for the row player and a SIS $S$, let
  $\sigma_1(S) = \sum_{r \in S} \sigma_1(r)$ (where $\sigma_1(r)$ is the
  probability $\sigma_1$ puts on $r$).
\end{definition}

Since our focus is on games in which one player can build up a reputation
and the other cannot, we do not consider SISes for the column player.
Equivalently, we consider all the column player's strategies to be in the
same SIS.

\begin{definition}
  Two mixed strategies $\sigma_1, \sigma_1'$ are {\em indistinguishable} to the
  column player if for all $S \in \mathcal{S}$,
  $\sigma_1(S)=\sigma_1'(S)$.
\end{definition}

\noindent {\bf Example.} Consider the following game:

\begin{center}
\begin{tabular}{r@{\hspace{1em}}|@{\hspace{1em}}c@{\hspace{1em}}|@{\hspace{1em}}c@{\hspace{1em}}|}
  &  $A$ &  $B$ \\ \hline
$a$ & 7,0 & 2,1 \\ \hline
$b$ & 6,1 & 0,0 \\ \hline
$c$ & 5,0 & 0,1 \\ \hline
$d$ & 4,1 & 1,0 \\ \hline
\end{tabular}\\
\end{center}
If the players move simultaneously, then $a$ is a strictly dominant
strategy and we obtain $(a,B)$ as the iterated strict dominance solution
(and hence the unique Nash equilibrium), with a utility of $2$ for the row
player.  If the row player gets to commit to a mixed strategy, then she
could commit to play $a$ and $b$ with probability $1/2$ each, inducing the
column player to play $A$,\footnote{As is commonly assumed in this model,
  ties for the column player are broken in the row player's favor; if not,
  the row player can simply commit to $1/2-\epsilon$ on $a$ and
  $1/2+\epsilon$ on $b$.} resulting in a utility of $6.5$ for the row
player. (Even committing to a pure strategy---namely, $b$---would result in
a utility of $6$.) Now suppose $\mathcal{S}=\{\{a,b\},\{c,d\}\}$, i.e., $a$
and $b$ are indistinguishable and so are $c$ and $d$.  In this case,
playing $a$ and $b$ with probability $1/2$ each (or playing $b$ with
probability $1$) is indistinguishable from playing $a$ with probability
$1$.  Hence, it is not credible that the row player would ever play $b$,
given that $a$ is a strictly dominant strategy.  But can the row player
still do better than always playing $a$ (and thereby inducing the column
player to play $B$)?  

We will return to this example shortly, but first we need to formalize the
idea of a deviation that cannot be detected by the column player.

\begin{definition}
  A profile $(\sigma_1,\sigma_2)$ has {\em no undetectable beneficial
    deviations} if (1) for all $\sigma_2'$, $u_2(\sigma_1,\sigma_2') \leq
  u_2(\sigma_1,\sigma_2)$, and (2) for all $\sigma_1'$ indistinguishable
  from $\sigma_1$, $u_1(\sigma_1',\sigma_2) \leq u_1(\sigma_1,\sigma_2)$.
\label{def:undetectable}
\end{definition}

The following simple proposition points out that this is equivalent to the
column player only putting probability on best responses, and the row
player only putting probability on rows that {\em within their SIS} are
best responses.

\begin{proposition}
  A profile $(\sigma_1,\sigma_2)$ has no undetectable beneficial deviations
  if and only if (1) for all $c, c' \in C$ with $\sigma_2(c)>0$,
  $u_2(\sigma_1,c') \leq u_2(\sigma_1,c)$, and (2) for all $S \in
  \mathcal{S}$, for all $r, r' \in S$ with $\sigma_1(r)>0$,
  $u_1(r',\sigma_2) \leq u_1(r,\sigma_2)$.
\end{proposition}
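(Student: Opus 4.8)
The plan is to prove both directions of the equivalence by exploiting the linearity of the utility functions in the players' mixed strategies. The two conditions in Definition~\ref{def:undetectable} and the two conditions in the proposition split apart cleanly: condition (1) in each is about the column player's deviations, and condition (2) in each is about the row player's undetectable deviations. So I would handle the two halves separately, and within each half prove the two-way equivalence.

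For the column player's condition, the argument is the standard one from Nash equilibrium theory. First I would observe that $u_2(\sigma_1,\sigma_2') = \sum_{c' \in C} \sigma_2'(c') u_2(\sigma_1,c')$ by linearity, so $u_2(\sigma_1,\sigma_2')$ is a convex combination of the pure-column payoffs $u_2(\sigma_1,c')$. For the forward direction, if some $c$ with $\sigma_2(c)>0$ were beaten by some $c'$, then shifting all of $\sigma_2$'s mass onto the best pure response would strictly increase $u_2$, contradicting condition (1) of the definition; taking the contrapositive gives condition (1) of the proposition. For the reverse direction, if every $c$ in the support is a best response, then $u_2(\sigma_1,\sigma_2) = \max_{c'} u_2(\sigma_1,c')$, and since any $\sigma_2'$ is a convex combination of the $u_2(\sigma_1,c')$ values it cannot exceed this maximum, giving condition (1) of the definition.

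For the row player's condition, the key additional point is to pin down exactly which mixed strategies $\sigma_1'$ are indistinguishable from $\sigma_1$: by the definition of indistinguishability, these are precisely the $\sigma_1'$ that agree with $\sigma_1$ on the total mass $\sigma_1(S)$ assigned to each SIS $S$. The crucial structural observation is therefore that an undetectable deviation is exactly a \emph{redistribution of mass within each SIS}: the row player may freely rearrange how the probability $\sigma_1(S)$ is spread across the rows of $S$, independently for each $S$, but may not change the totals. Because $u_1(\sigma_1',\sigma_2) = \sum_{S \in \mathcal{S}} \sum_{r \in S} \sigma_1'(r) u_1(r,\sigma_2)$ decomposes as a sum over SISes, and the weights within each $S$ sum to the fixed value $\sigma_1(S)$, maximizing $u_1$ over all indistinguishable $\sigma_1'$ amounts to independently, within each SIS $S$ with $\sigma_1(S)>0$, placing all of the mass $\sigma_1(S)$ on the rows of $S$ that maximize $u_1(r,\sigma_2)$. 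The equivalence then follows exactly as in the column-player case, applied SIS by SIS.

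I expect the main obstacle to be stating the within-SIS redistribution argument cleanly rather than any genuine mathematical difficulty: one must be careful that condition (2) of the proposition only constrains rows within the \emph{same} SIS (an inferior row in one SIS is fine as long as it beats nothing in its own SIS), and that the decomposition of $u_1$ across SISes is what licenses treating each SIS as an independent optimization subject to its own fixed mass budget. The forward direction requires constructing, from a within-SIS violation, a concrete indistinguishable deviation $\sigma_1'$ that increases $u_1$ (shift mass from the inferior row to the superior row in the same SIS, leaving all other SIS totals and all other rows untouched), and verifying that this $\sigma_1'$ is genuinely indistinguishable from $\sigma_1$, which it is since only within-SIS mass was moved. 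Once that construction is in hand, both directions are routine.
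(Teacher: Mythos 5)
Your proof is correct; the paper in fact states this as a ``simple proposition'' and omits the proof entirely, and your argument---the standard support/best-response equivalence for the column player via linearity, together with the observation that indistinguishable deviations are exactly redistributions of mass within SISes, handled SIS by SIS under fixed per-SIS mass budgets---is precisely the routine argument the paper implicitly relies on. Nothing further is needed.
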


\noindent {\bf Example continued.}  In the game above, consider the
profile $$(((1/2)c,(1/2)d),((1/2)A,(1/2)B))$$ This profile has no
undetectable deviations: (1) the column player is best-responding, and (2)
the only undetectable deviations for the row player do not put any
probability on $\{a,b\}$, and $c$ and $d$ are both equally good responses.

Note that a profile that has no undetectable beneficial deviations may
still not be stable, in the sense that player $1$ may prefer to deviate to
a mixed strategy that is in fact distinguishable from $\sigma_1$, and build
up a reputation for playing that strategy instead. But in a sense, these
profiles are {\em feasible} solutions for the row player: {\em given} that
the row player decides to build up a reputation for the distribution over
SISes resulting from $\sigma_1$, the profile $(\sigma_1,\sigma_2)$ is
stable.  This is similar to the sense in which in the regular Stackelberg
model, any profile consisting of a mixed strategy for the row player and a
best response for the column player is feasible: the row player may not
have had good reason to commit to that particular mixed strategy, but {\em
  given} that she did, the profile is stable.  In fact, this just
corresponds to the special case of our model where all rows are
distinguishable.

\begin{proposition}
  If $|\mathcal{S}|=1$ (all rows are indistinguishable), then a profile has
  no undetectable beneficial deviations if and only if it is a Nash
  equilibrium of the game.  If $|\mathcal{S}|=|R|$ (all rows are
  distinguishable), then a profile has no undetectable beneficial deviations
  if and only if the column player is best-responding.
\label{prop:UBD}
\end{proposition}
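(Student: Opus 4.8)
The plan is to derive both claims directly from the characterization in the preceding proposition, which says that $(\sigma_1,\sigma_2)$ has no undetectable beneficial deviations if and only if (1) every column in the support of $\sigma_2$ maximizes $u_2(\sigma_1,\cdot)$, and (2) within each SIS $S$, every row in the support of $\sigma_1$ maximizes $u_1(\cdot,\sigma_2)$ over $S$. Condition (1) is common to both extremes; the two cases differ only in what condition (2) asserts, so the whole argument is a specialization of that proposition to the two boundary values of $|\mathcal{S}|$.

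For the first claim I would assume $|\mathcal{S}|=1$, so that the single SIS is all of $R$. Then condition (2) specializes to: for all $r,r' \in R$ with $\sigma_1(r)>0$, $u_1(r',\sigma_2) \leq u_1(r,\sigma_2)$. This says precisely that every pure strategy in the support of $\sigma_1$ is a best response to $\sigma_2$ over the entire row set, which is the standard criterion for $\sigma_1$ itself to be a best response to $\sigma_2$. Combining this with condition (1) yields exactly the statement that each player best-responds to the other, i.e., that $(\sigma_1,\sigma_2)$ is a Nash equilibrium; conversely, any Nash equilibrium satisfies both conditions, giving the equivalence.

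For the second claim I would assume $|\mathcal{S}|=|R|$, so that every SIS is a singleton $\{r\}$. In condition (2) we are then forced to take $r'=r$, and the inequality $u_1(r,\sigma_2) \leq u_1(r,\sigma_2)$ holds trivially. Hence condition (2) is vacuously satisfied by every profile, and having no undetectable beneficial deviations reduces to condition (1) alone, namely that the column player is best-responding. This establishes the second equivalence.

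I do not expect a substantial obstacle here, since the result is essentially a reading-off of the preceding proposition. The only point requiring care is the step in the first claim where ``best response within the single SIS $R$'' must be identified with ``best response over the full strategy space'': this relies on the elementary fact that a mixed strategy is a best response exactly when its support consists of pure best responses, so that the pointwise conditions on rows in condition (2) are equivalent to optimality of $\sigma_1$ as a whole.
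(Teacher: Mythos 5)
Your proposal is correct: the paper states this proposition without any explicit proof, treating it as an immediate consequence of the preceding support characterization, and your specialization of that characterization to the two boundary cases (together with the standard fact that a mixed strategy is a best response iff its support consists of pure best responses) is exactly the intended argument. One could argue even more directly from Definition~\ref{def:undetectable} itself---with $|\mathcal{S}|=1$ \emph{every} mixed strategy is indistinguishable from $\sigma_1$, so condition (2) there is literally the row player's best-response condition, while with $|\mathcal{S}|=|R|$ indistinguishable means identical, making condition (2) vacuous---but this is the same observation in a slightly different dress.
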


We can now define an optimal solution.

\begin{definition}
  A profile with no undetectable beneficial deviations is a {\em
    Stackelberg equilibrium with limited observation (SELO)} if among such
  profiles it maximizes the row player's utility.
\end{definition}

\noindent {\bf Example continued.}  In the game above, consider the
profile $$(((1/2)a,(1/2)d),((1/2)A,(1/2)B))$$ This profile has no
undetectable deviations: $A$ and $B$ are both best responses for the column
player, and the row player strictly prefers $a$ to $b$ and is indifferent
between $c$ and $d$.  It gives the row player utility $3.5$. We now argue
that it is in fact a SELO.  First, note that a SELO must put at least
probability $1/2$ on $d$: for, if it did not, then, because the row player
would never play $b$, the column player would strictly prefer $B$, which
would result in lower utility for the row player.  Second, the column
player must play $B$ at least half the time, because otherwise, the row
player would strictly prefer $c$ to $d$---but if the row player only plays
$a$ and $c$, the column player would strictly prefer $B$.  Under these two
constraints, the row player would be best off having as much as possible of
the remaining probabilities on $a$ and $A$, and this results in the profile
above.
 
\begin{proposition}
  If $|\mathcal{S}|=1$ (all rows are indistinguishable), then a profile is
  a SELO if and only if it is a Nash equilibrium that maximizes the row
  player's utility among Nash equilibria.  If $|\mathcal{S}|=|R|$ (all rows
  are distinguishable), then a profile is a SELO if and only if it is a
  Stackelberg equilibrium (with full observation).
\label{prop:SELO}
\end{proposition}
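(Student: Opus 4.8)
The plan is to derive both equivalences directly from Proposition~\ref{prop:UBD} together with the definition of a SELO, since the proposition already pins down the feasible set in each extreme case. Recall that a SELO is, by definition, a profile that maximizes the row player's utility among all profiles with no undetectable beneficial deviations. Hence in each case it suffices to (i) use Proposition~\ref{prop:UBD} to identify exactly which profiles have no undetectable beneficial deviations, and (ii) observe that maximizing $u_1$ over that set is precisely the concept claimed.

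For the case $|\mathcal{S}|=1$, Proposition~\ref{prop:UBD} states that the profiles with no undetectable beneficial deviations are exactly the Nash equilibria. Unfolding the definition of SELO, a SELO is then a Nash equilibrium that maximizes $u_1$ among all Nash equilibria, which is exactly the claimed statement; this direction should need no argument beyond quoting the proposition and the definition.

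For the case $|\mathcal{S}|=|R|$, Proposition~\ref{prop:UBD} states that the profiles with no undetectable beneficial deviations are exactly those $(\sigma_1,\sigma_2)$ in which the column player best-responds to $\sigma_1$. A SELO is therefore a profile in which $\sigma_2$ is a best response to $\sigma_1$ and which maximizes $u_1$ over all such profiles. I would then identify this with the Stackelberg equilibrium (with full observation): for each fixed $\sigma_1$, ranging $\sigma_2$ over the best responses to $\sigma_1$ and retaining the one most favorable to the row player realizes exactly the tie-breaking-in-the-leader's-favor convention used earlier in the paper, and maximizing the resulting value over $\sigma_1$ is the definition of the row player's Stackelberg utility.

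The main obstacle is not genuine difficulty but careful bookkeeping at the definitional interface. The standard Stackelberg model is phrased as the leader optimizing over $\sigma_1$ alone, with the follower's response (and its tie-break) induced, whereas a SELO is defined by optimizing over complete profiles $(\sigma_1,\sigma_2)$. The key point I would state explicitly is that optimizing over the full profile automatically selects, among the column player's best responses to $\sigma_1$, the one most favorable to the row player, which is precisely the strong-Stackelberg convention; the two optimizations therefore have the same value and the same maximizers. I would also remark that the feasible set is nonempty, since it contains a Nash equilibrium, so that the SELO---and hence the Stackelberg equilibrium it is identified with---is well defined.
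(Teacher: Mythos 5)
Your proposal is correct and matches the paper's reasoning exactly: the paper states Proposition~\ref{prop:SELO} without an explicit proof, treating it as an immediate consequence of Proposition~\ref{prop:UBD} combined with the definition of a SELO as a utility-maximizing profile over the feasible set, which is precisely the argument you spell out. Your additional care about the tie-breaking convention (optimizing over full profiles automatically selects the follower's best response most favorable to the leader, i.e., the strong-Stackelberg convention) and about nonemptiness of the feasible set is exactly the bookkeeping the paper leaves implicit.
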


\section{Computational Results}

We now consider the complexity of computing a SELO.  We immediately obtain:

\begin{corollary}
  When $|\mathcal{S}|=1$, computing a SELO is NP-hard (and the maximum
  utility for the row player in a profile with no undetectable beneficial
  deviations is inapproximable unless P=NP).
\end{corollary}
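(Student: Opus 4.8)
The plan is to reduce to an already-understood question about Nash equilibria via the characterization already in hand. By Proposition~\ref{prop:SELO}, when $|\mathcal{S}|=1$ a profile is a SELO if and only if it is a Nash equilibrium that maximizes the row player's utility among all Nash equilibria; equivalently, by Proposition~\ref{prop:UBD}, the profiles with no undetectable beneficial deviations are exactly the Nash equilibria, so the maximum row-player utility among them is the maximum of $u_1$ over the Nash equilibria. Hence, in this regime, computing a SELO is precisely the problem of finding an optimal (for player~1) Nash equilibrium of a two-player normal-form game. Both the NP-hardness and the inapproximability claims will therefore follow from the corresponding known hardness results for this optimization problem (Gilboa and Zemel, 1989; Conitzer and Sandholm).

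To make the argument self-contained I would reproduce a reduction from SAT. Given a formula $\phi$ on variables $x_1,\dots,x_n$, I would build a symmetric two-player game whose pure strategies encode the literals, the variables, and the clauses, together with one \emph{default} strategy $f$. The payoffs are engineered so that (i) the profile in which both players mix (uniformly) over the literals of a satisfying assignment is a Nash equilibrium yielding a high payoff, say proportional to $n$, and (ii) the profile in which both players play $f$ is always a Nash equilibrium but yields only a small payoff. The invariants to verify are that no literal or clause strategy admits a profitable deviation when the mixture corresponds to a consistent satisfying assignment, and---crucially---that when $\phi$ is \emph{unsatisfiable}, every Nash equilibrium collapses to the low-payoff regime, so that no equilibrium attains the high value.

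With this construction, deciding whether there is a Nash equilibrium of value at least the high threshold is exactly deciding satisfiability of $\phi$, giving NP-hardness of the SELO decision problem. For inapproximability, the same reduction yields a multiplicative gap: satisfiable instances admit an equilibrium of value $\Theta(n)$, while unsatisfiable instances cap all equilibria at the small default value, so any polynomial-time algorithm approximating the optimal row-player utility within a factor short of closing this gap would decide SAT, forcing $\mathrm{P}=\mathrm{NP}$.

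The main obstacle is step~(ii): ensuring that the game has \emph{no} spurious high-value Nash equilibria when $\phi$ is unsatisfiable. Proving that every equilibrium either encodes a genuine satisfying assignment or else yields the default payoff requires a careful analysis of the best-response structure, in particular ruling out mixed equilibria that partially encode inconsistent assignments. This is exactly where the cited constructions invest their effort, and it is the part I would have to get right to secure both the hardness reduction and the inapproximability gap.
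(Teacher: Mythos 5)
Your first paragraph is exactly the paper's proof: by Propositions~\ref{prop:UBD} and~\ref{prop:SELO}, the $|\mathcal{S}|=1$ case coincides with maximizing the row player's utility over Nash equilibria, and the hardness and inapproximability then follow from the cited results of Gilboa--Zemel and Conitzer--Sandholm. The additional SAT-reduction sketch is not needed (the paper simply cites these known results) and is incomplete on its own---you concede the crucial step of ruling out spurious high-value equilibria---but since you also invoke the citations, the argument stands as essentially identical to the paper's.
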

\begin{proof}
  By Propositions~\ref{prop:UBD} and~\ref{prop:SELO}, these problems are
  equivalent to maximizing the row player's utility in a Nash equilibrium,
  which is known to be NP-hard and
  inapproximable~\citep{Gilboa89:Nash,Conitzer03:Nash}.
\end{proof}

This still leaves open the question of whether the problem becomes easier
if the individual SISes have small size. Unfortunately, the next result
shows that the problem remains NP-hard and inapproximable in this case.
This motivates extending the model to one that allows correlation, as we
will do in Section~\ref{se:signaling}.

\begin{theorem}
  Computing a SELO remains NP-hard even when $|S|=2$ for all $S \in
  \mathcal{S}$ (and in fact it is NP-hard to check whether there exists a
  profile with no undetectable beneficial deviations that gives the row
  player positive utility, even when all payoffs are nonnegative).
\label{th:hard}
\end{theorem}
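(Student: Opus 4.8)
The plan is to reduce from SAT (equivalently 3-SAT). Given a formula $\phi$ with variables $x_1,\dots,x_n$ and clauses $C_1,\dots,C_m$, I would construct a game in which every SIS has exactly two rows, all payoffs are nonnegative, and a profile with no undetectable beneficial deviations giving the row player positive utility exists if and only if $\phi$ is satisfiable. The central idea is to make each variable a size-$2$ SIS: for each $x_i$ introduce $S_i=\{r_i^{+},r_i^{-}\}$, where $r_i^{+}$ stands for setting $x_i$ true and $r_i^{-}$ for setting it false. The row player's observable commitment is then a distribution over the $S_i$, together with the (unobservable) choice, within each $S_i$, of which literal to emphasize; by the preceding Proposition, within each $S_i$ she may place probability only on rows that are within-SIS best responses to $\sigma_2$. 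Intuitively, $\sigma_1$ is meant to encode a truth assignment and the column player's best response is meant to \emph{audit} it.

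For the column side I would use three kinds of columns, all held in the single column SIS. A \emph{reward} column $c^{\star}$ on which the row player earns positive payoff; for each clause $C_j$ a \emph{clause-checking} column $c_j$ that the column player strictly prefers precisely when the row's literal distribution fails to satisfy $C_j$ (so that $c_j$ pays the column well against exactly the literals that falsify $C_j$), and on which the row earns $0$; and, if needed, \emph{consistency} columns that make it suboptimal for the row to split weight between $r_i^{+}$ and $r_i^{-}$, forcing a genuine (near-pure) assignment. The payoffs would be calibrated so that $c^{\star}$ is a best response for the column if and only if every $c_j$ is weakly worse, i.e.\ if and only if every clause is satisfied by the encoded assignment, and so that, given that the column plays $c^{\star}$, the within-SIS best-response constraint lets the row concentrate each $S_i$ on a single literal. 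Any auxiliary rows introduced for the consistency gadgets would be paired up, so that every SIS still has size exactly $2$.

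The two directions would then be checked as follows. If $\phi$ has a satisfying assignment, let $\sigma_1$ put, within each $S_i$, all of that SIS's weight on the literal dictated by the assignment, and let $\sigma_2=c^{\star}$; I would verify that the column is best-responding (no $c_j$ beats $c^{\star}$ because all clauses are satisfied) and that within each $S_i$ the played literal is a best response, so the profile has no undetectable beneficial deviations and gives the row positive utility. Conversely, if $\phi$ is unsatisfiable, I would argue that in any no-undetectable-beneficial-deviations profile some clause is violated by the induced assignment, so some $c_j$ is a (weak) best response and is played by the column, driving the row's payoff to $0$; the consistency gadgets ensure that splitting inside a SIS cannot be used to satisfy a clause for free.

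The main obstacle I anticipate is the fixed-point interaction between the two constraints: the column's best response depends on all of $\sigma_1$, while the row's admissible within-SIS support depends on $\sigma_2$, and I must choose payoffs so that both conditions hold simultaneously \emph{exactly} at encodings of satisfying assignments. In particular I must rule out ``hedging'' profiles in which the row spreads probability within variables (or across SISes) to keep every clause-checking column slightly suboptimal while still collecting positive reward, and I must deliver all punishment through nonnegative payoffs with every SIS of size exactly two. Getting the clause-checking thresholds and the consistency gadget robust against such fractional deviations is where the bulk of the technical work will lie; once this is done, the inapproximability and the hardness of computing a SELO follow because positive versus zero row utility is decided by satisfiability.
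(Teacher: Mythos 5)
There is a genuine gap here: what you call ``the main obstacle'' is in fact the entire mathematical content of the theorem, and your proposal defers it rather than resolving it. The within-SIS condition only forces the row player to place probability on rows that are best responses \emph{within their SIS} against $\sigma_2$; it cannot by itself prevent her from mixing $r_i^+$ and $r_i^-$ whenever the two are indifferent against $\sigma_2$. Your ``consistency columns'' are supposed to break this indifference, but any such column's attractiveness to the column player depends on the row's distribution, and the row's admissible supports depend on the column's distribution---exactly the circular dependence you flag. Worse, in a SAT encoding a fractional (hedged) assignment genuinely can keep every clause column weakly below the reward column while the row still collects positive payoff: clause-violation is measured linearly in the probability mass on falsifying literals, so half-half mixtures typically ``satisfy'' all clauses fractionally. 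This is precisely why naive SAT reductions fail for equilibrium-selection problems, and your sketch contains no mechanism that rules it out; until that gadget is exhibited and verified, neither direction of the reduction is established.

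The paper avoids this difficulty entirely by reducing from EXACT-COVER-BY-3-SETS instead, choosing a structure whose counting argument is \emph{robust to fractional play}. Each set $T_j$ becomes a SIS $\{T_j^+,T_j^-\}$, where $T_j^+$ pays the row $m/3$ only against its own column $T_j$, while $T_j^-$ pays $1$ against every set column. If the column player puts total probability $p>0$ on set columns, the within-SIS constraint implies the row can put weight on $T_j^+$ only when column $T_j$ itself receives probability at least $3p/m$---and at most $m/3$ columns can do so. Meanwhile, the column player's unwillingness to deviate to an element column $t$ forces every element to be covered by some played $T_j^+$. So from \emph{any} (possibly heavily mixed) profile with no undetectable beneficial deviations and positive row utility, one extracts at most $m/3$ sets covering all $m$ elements, i.e., an exact cover; no purification or anti-hedging gadget is ever needed. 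If you want to salvage your approach, you would need to inject a comparable counting or packing structure into the SAT reduction (or simply reduce from a covering problem as the paper does); as written, the proposal is a plan whose critical step is exactly the step that is missing.
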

\begin{proof}
  We reduce from the EXACT-COVER-BY-3-SETS problem, in which we are given a
  set of elements $T$ ($|T|=m$, with $m$ divisible by $3$) and subsets $T_j
  \subseteq T$ that each satisfy $|T_j|=3$, and are asked whether there
  exist $m/3$ of these subsets that together cover all
  of $T$.  For an arbitrary instance of this problem, we construct the
  following game.  For each $T_j$, we add a SIS consisting of two rows,
  $\{T_j^+,T_j^-\}$, as well as a column $T_j$.  For each element $t \in
  T$, we add a column $t$.  The utility functions are as follows.
\begin{itemize}
\item $u_1(T_j^+,T_j)=m/3$ for any $j$
\item $u_1(T_j^+,T_{j'})=0$ for any $j,j'$ with $j \neq j'$
\item $u_1(T_j^-,T_{j'})=1$ for any $j,j'$
\item $u_1(r, t)=0$ for any row $r$ and element $t$
\item $u_2(r,T_j)=m/3-1$ for any row $r$ and any $j$
\item $u_2(T_j^+,t)=0$ for any $j$ and $t \in T_j$
\item $u_2(r,t)=m/3$ for any element $t$ and row $r$ that is not some $T_j^+$ with $t \in T_j$
\end{itemize}
First suppose the EXACT-COVER-BY-3-SETS instance has a solution.  Let the row player play uniformly over the $m/3$ corresponding rows $T_j^+$, and the column player uniformly over the $m/3$ corresponding columns $T_j$.  The row player's expected utility for any of the rows in her support is $1$; deviating to the corresponding $T_j^-$ would still only give her $1$.  The column player's expected utility is $m/3-1$ for any $T_j$; because the row player plays an exact cover, deviating to any $t$ gives him expected utility $(m/3)(m/3-1)/(m/3)=m/3-1$.  
So this profile has no undetectable beneficial deviations (in fact it is a Nash equilibrium)
 and gives the row player an expected utility of $1$.

Now suppose that the game has a SELO in which the row player gets positive utility, which implies that the 
column player puts total probability $p>0$ on his $T_j$ columns.  It follows that for every $t \in T$, the total probability that the row player puts on rows $T_j^+$ with $t \in T_j$ is at least $3/m$, or otherwise the column player would strictly prefer playing $t$ to playing any $T_j$. However, note that the row player can only put positive probability on rows $T_j^+$ where the corresponding column $T_j$ receives probability at least $3p/m$ (thereby resulting in expected utility at least $p$ for the row player for playing $T_j^+$), because otherwise the corresponding row $T_j^-$ (which is indistinguishable) would be strictly preferable (resulting in expected utility $p$).  But of course there can be at most $m/3$ such columns $T_j$, and these $T_j$ must cover all the elements $t$ by what we said before.  Hence the EXACT-COVER-BY-3-SETS instance has a solution.
\end{proof}

\section{Adding Signaling}
\label{se:signaling}

The notion of correlated equilibrium~\citep{Aumann74:Subjectivity} results
from augmenting a game with a trusted mediator that sends correlated
signals to the agents.  As is well known, without loss of generality, we
can assume the signal that an agent receives is simply the action she is to
take.  This is for the following reason.  If a correlated equilibrium
relies on an agent randomizing among multiple actions conditional on
receiving a particular signal, then we may as well have the mediator do
this randomization on behalf of the agent before sending out the signal. It
is well known that correlated equilibria can outperform Nash equilibria
from all agents' perspectives.  For example, consider Shapley's game, which
is a version of rock-paper-scissors where choosing the same action as the
other counts as a loss.

\begin{center}
\begin{tabular}{r@{\hspace{1em}}|@{\hspace{1em}}c@{\hspace{1em}}|@{\hspace{1em}}c@{\hspace{1em}}|@{\hspace{1em}}c@{\hspace{1em}}|}
  &   $A$ & $B$ & $C$ \\ \hline
$a$ & 0,0 & 1,0 & 0,1 \\ \hline
$b$ & 0,1 & 0,0 & 1,0 \\ \hline
$c$ & 1,0 & 0,1 & 0,0 \\ \hline
\end{tabular}\\
\end{center}

Whereas the only Nash equilibrium of this game is for both players to
randomize uniformly (resulting in $0,0$ payoffs $1/3$ of the time), there
is a correlated equilibrium that only results in the $1,0$ and $0,1$
outcomes, each $1/6$ of the time.  That is, if the mediator is set up to
draw one of these six entries uniformly at random, and then tell each agent
what she is supposed to play (but not what the other is supposed to play),
then each agent has an incentive to follow the recommendation: doing so
will result in a win half the time, and it is not possible to do better
given what the agent knows.

Correlated equilibria are easier to compute than Nash equilibria: given a
game in normal form, there is a linear program formulation for computing
even optimal correlated equilibria (say, ones that maximize the row
player's utility).  The linear program presented later in
Figure~\ref{fi:LP} is closely related.

Similar signaling has received attention in the Stackelberg model.  One may
assume a more powerful leader in this model that can commit not only to
taking actions in a particular way, but also to sending signals in a way
that is correlated with how she takes actions.  (Again, the motivation for
using this in real applications might be that over time the leader develops
a reputation for sending out signals according to a particular
distribution, and playing particular distributions over actions conditional
on those signals.)  Because the leader can commit to sending signals in a
particular way, there is no need to introduce an independent mediator
entity in this context.  As it turns out, in a two-player normal-form game
this additional power does not buy the leader anything, but with more
players it does~\citep{Conitzer11:Commitment}. Such signaling can also help
in Bayesian games~\citep{Xu16:Signaling} and stochastic
games~\citep{Letchford12:Computing}, both from the perspective of
increasing the leader's utility and from the perspective of making the
computation easier.

It is straightforward to see that signaling can be useful in our limited
commitment model as well.  For example, if we just take Shapley's game with
$|\mathcal{S}|=1$, then by Proposition~\ref{prop:SELO} without signaling we
are stuck with the Nash equilibrium, but it seems we should be able to
obtain the improved correlated equilibrium outcome with some form of
signaling.  But what is the right model of signaling here?  
We start with a very powerful model of signaling, and then discuss less
powerful models in Section~\ref{se:weaker}.

\begin{definition}
  In the {\em trusted mediator model}, the row player can design an
  independent trusted mediator that sends signals privately to each player
  according to a pre-specified joint distribution.  After the round of play
  has completed, the mediator publicly reveals the signal sent to the row
  player.
\label{def:trusted}
\end{definition}

The after-the-fact public revelation of the signal sent to the row player
allows the row player to commit to (i.e., in the long run develop a
reputation for) responding to each signal with a particular distribution of
play.  Specifically, after each completed round, the column player learns
the signal sent to, and the SIS played by, the row player.\footnote{It is
  easy to get confused here---does the column player not learn more in a
  round purely by virtue of his own payoff from that round?  It is
  important to remember that we are not considering repeated play by the
  column player. The idea is that the column player can observe over time
  the signals and how the row player acts {\em before} the column player
  ever acts.  For discussion of security contexts in which certain types of players
  can receive messages that are inaccessible to other types,
  see~\citet{Xu16:Signaling}.}  Thus, if the row player according to the
signal that she received was supposed to play an action from a particular
SIS, then the column player can verify that she did.  However, the row
player may have an incentive to deviate {\em within} a SIS, because this is
undetectable.

In the appendix, we show that under the trusted mediator model, without
loss of generality a signal consists of just an action to play.
With this in mind, we now define formally what it means for a correlated
profile to have no undetectable beneficial deviations.

\begin{definition}
A correlated profile $\sigma$ has {\em no undetectable beneficial
  deviations} if (1) for all $c,c' \in C$ with $\sum_{r \in R}
\sigma(r,c)>0$, we have $\sum_{r \in R} \sigma(r,c) (u_2(r,c)-u_2(r,c'))
\geq 0$, and (2) for all $S \in \mathcal{S}$, for all $r,r' \in S$ with
$\sum_{c \in C} \sigma(r,c)>0$, we have $\sum_{c \in C}
\sigma(r,c)(u_1(r,c)-u_1(r',c)) \geq 0$.
\label{def:undetectable2}
\end{definition}

Note that, as is well known in the formulation of correlated equilibrium,
in the first inequality, we can use $\sigma(r,c)$ rather than the more
cumbersome $\sigma(r,c) / \sum_{r'' \in R} \sigma(r'',c)$, which would be
the conditional probability of seeing $r$ given a signal of $c$, because
the denominator is a constant (similar for the second inequality).  As a
result, the condition that $\sum_{r \in R} \sigma(r,c)>0$ is in fact not
necessary because the inequality is vacuously true otherwise.  This is what
allows the standard linear program formulation of correlated equilibrium,
as well as the linear program we present below in Figure~\ref{fi:LP}.

\begin{definition}
A correlated profile with no undetectable beneficial deviations is a {\em
  Stackelberg equilibrium with signaling and limited observation (SESLO)}
if among such profiles it maximizes the row player's utility.
\end{definition}

\noindent {\bf Example.} Consider the following game:
\begin{center}
\begin{tabular}{r@{\hspace{1em}}|@{\hspace{1em}}c@{\hspace{1em}}|@{\hspace{1em}}c@{\hspace{1em}}|@{\hspace{1em}}c@{\hspace{1em}}|@{\hspace{1em}}c@{\hspace{1em}}|}
    &  $A$ & $B$ & $C$ & $D$\\ \hline
$a$ & 0,0 & 12,0 & 0,1 & 0,0\\ \hline
$b$ & 0,1 & 0,0 & 12,0 & 0,0\\ \hline
$c$ & 12,0 & 0,1 & 0,0 & 0,0\\ \hline
$d$ & 5,0 & 5,0 & 5,0 & 0,1\\ \hline
$e$ & 7,0 & 7,0 & 7,0 & 1,1\\ \hline
\end{tabular}\\
\end{center}
Suppose $\mathcal{S}=\{\{a,b,c,d\},\{e\}\}$.  Then the following correlated
profile (in which the signal an agent receives is which action to take) is
a SESLO:
$$((1/9)(a,B),(1/9)(a,C),(1/9)(b,A),(1/9)(b,C),(1/9)(c,A),(1/9)(c,B),$$ $$(1/9)(e,A),(1/9)(e,B),(1/9)(e,C))$$
With this profile, for any signal the column player can receive, following
the signal will give him utility $1/3$, and so will any deviation.  For any
signal the row player receives in SIS $\{a,b,c,d\}$, following the signal
will give her $6$; deviating to $a$, $b$, or $c$ will give either $0$ or
$6$, and deviating to $d$ will give $5$. The row player obtains utility
$19/3$ from this profile.\footnote{This was verified to be optimal using
  the linear program in Figure~\ref{fi:LP}; same for the next case.}  In
contrast, without any commitment (if $|\mathcal{S}|$ had been $1$), the
outcome $(e,D)$ would have been a SESLO, giving the row player utility only
$1$.  Also, without signaling (but still with
$\mathcal{S}=\{\{a,b,c,d\},\{e\}\}$), the outcome $(e,D)$ would have been a
SELO.  For consider a mixed-strategy profile without any undetectable
beneficial deviations, and suppose it puts positive probability on at least
one of $A$, $B$, and $C$.  Then at least one of $a$, $b$, and $c$ must get
positive probability as well, for otherwise the column player would be
better off playing $D$.  Because $a$, $b$, and $c$ are all in the same SIS
and perform equally well against $D$, and because $A$, $B$, and $C$ all
perform equally well against $d$ and $e$, if we condition on the players
playing from $a$,$b$,$c$ and $A$,$B$,$C$, the result must be a Nash
equilibrium of that $3 \times 3$ game, which means that all of $A$, $B$,
and $C$ get the same probability.  But in that case, $d$ (which is in the
same SIS) is a better response for the row player, and we have a
contradiction.  Hence any SELO involves the column player always playing
$D$ and the most the row player can obtain is $1$.

We next have the following simple proposition that the ability to signal never
hurts the row player.

\begin{proposition}
The row player's utility from a SESLO is always at least that of a SELO.
\end{proposition}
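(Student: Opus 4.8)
The plan is to exhibit, for every SELO, a correlated profile that is feasible for the SESLO problem and yields the row player the same utility; since a SESLO maximizes the row player's utility over all such correlated profiles, this immediately gives the desired inequality. The natural candidate is the product (independent) distribution: given a SELO $(\sigma_1,\sigma_2)$, define the correlated profile $\sigma(r,c) = \sigma_1(r)\,\sigma_2(c)$. This is exactly the case in which the mediator draws its signals independently, so the SESLO feasible region contains all such product profiles, and the two notions of ``no undetectable beneficial deviations'' (Definitions~\ref{def:undetectable} and~\ref{def:undetectable2}) should coincide on them.

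First I would verify that $\sigma$ satisfies Definition~\ref{def:undetectable2}. For condition (1), factoring the product gives $\sum_{r} \sigma(r,c)(u_2(r,c)-u_2(r,c')) = \sigma_2(c)\,(u_2(\sigma_1,c)-u_2(\sigma_1,c'))$; whenever $\sum_r \sigma(r,c)>0$ we have $\sigma_2(c)>0$, and the characterization of no undetectable beneficial deviations for mixed-strategy profiles (the proposition following Definition~\ref{def:undetectable}) guarantees $u_2(\sigma_1,c) \ge u_2(\sigma_1,c')$, so the product is nonnegative. For condition (2), the analogous factoring yields $\sum_{c}\sigma(r,c)(u_1(r,c)-u_1(r',c)) = \sigma_1(r)\,(u_1(r,\sigma_2)-u_1(r',\sigma_2))$; whenever $\sum_c\sigma(r,c)>0$ we have $\sigma_1(r)>0$, and for $r,r'$ in the same SIS the same characterization gives $u_1(r,\sigma_2)\ge u_1(r',\sigma_2)$, again nonnegative. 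Hence $\sigma$ has no undetectable beneficial deviations in the correlated sense.

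Next I would note that the row player's utility is preserved: $\sum_{r,c}\sigma(r,c)\,u_1(r,c) = \sum_{r,c}\sigma_1(r)\sigma_2(c)\,u_1(r,c) = u_1(\sigma_1,\sigma_2)$, the SELO value. Therefore the product profile is a feasible SESLO profile achieving the SELO utility, so the optimal SESLO value is at least the SELO value.

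I do not expect a genuine obstacle here; the only point requiring care is that the correlated conditions in Definition~\ref{def:undetectable2} factor cleanly under independence and reduce to the mixed-strategy conditions, which is precisely why the product distribution lands inside the SESLO feasible region. If one preferred to avoid invoking the characterization, one could instead argue directly that any within-SIS deviation or any column deviation available against the correlated product profile is also available against $(\sigma_1,\sigma_2)$ and would have to be beneficial there, contradicting that $(\sigma_1,\sigma_2)$ is a SELO.
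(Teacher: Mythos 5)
Your proposal is correct and is essentially identical to the paper's own proof: both embed the SELO $(\sigma_1,\sigma_2)$ as the product correlated profile $\sigma(r,c)=\sigma_1(r)\sigma_2(c)$, factor the correlated no-undetectable-deviation conditions of Definition~\ref{def:undetectable2} so they reduce to the mixed-strategy conditions (via the pointwise characterization following Definition~\ref{def:undetectable}), and conclude by optimality of the SESLO. The only cosmetic difference is that you cite the characterization proposition explicitly, which is if anything slightly more careful than the paper's phrasing.
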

\begin{proof}
  We show that an uncorrelated profile $(\sigma_1,\sigma_2)$ that has no
  undetectable deviations (in the sense of
  Definition~\ref{def:undetectable}) also has no undetectable deviations
  (in the sense of Definition~\ref{def:undetectable2}) when interpreted as
  a correlated profile $\sigma$ (with
  $\sigma(r,c)=\sigma_1(r)\sigma_2(c)$); the result follows.  First, for
  all $c,c' \in C$ with $\sum_{r \in R} \sigma(r,c)>0$ (which is
  equivalent to $\sigma_2(c)>0$), we have $\sum_{r \in R} \sigma(r,c)
  (u_2(r,c)-u_2(r,c')) = \sigma_2(c) \sum_{r \in R} \sigma_1(r)
  (u_2(r,c)-u_2(r,c')) = \sigma_2(c) (u_2(\sigma_1,c) - u_2(\sigma_1,c'))
  \geq 0$ by the best-response condition of
  Definition~\ref{def:undetectable}.  Similarly, for all $S \in
  \mathcal{S}$, for all $r,r' \in S$ with $\sum_{c \in C} \sigma(r,c)>0$
  (which is equivalent to $\sigma_1(r)>0$), we have $\sum_{c \in C}
  \sigma(r,c)(u_1(r,c)-u_1(r',c)) = \sigma_1(r) \sum_{c \in C}
  \sigma_2(c)(u_1(r,c)-u_1(r',c)) = \sigma_1(r)
  (u_1(r,\sigma_2)-u_1(r',\sigma_2)) \geq 0$ by the
  best-response-within-a-SIS condition of
  Definition~\ref{def:undetectable}.
\end{proof}

\begin{proposition}
  If $|\mathcal{S}|=1$ (all rows are indistinguishable), then a profile is
  a SESLO if and only if it is a correlated equilibrium that maximizes the
  row player's utility.  If $|\mathcal{S}|=|R|$ (all rows are
  distinguishable), then a profile is a SESLO if and only if it is a
  Stackelberg equilibrium with signaling (which can do no better than a
  Stackelberg equilibrium without signaling).
\label{prop:SESLO}
\end{proposition}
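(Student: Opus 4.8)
The plan is to compare the two defining conditions of Definition~\ref{def:undetectable2} against the incentive constraints of a correlated equilibrium and of a Stackelberg equilibrium with signaling, in the two extreme partitions, and then transfer the utility-maximization over the feasible set.

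First I would handle the case $|\mathcal{S}|=1$. Here all rows lie in a single SIS, so condition~(2) of Definition~\ref{def:undetectable2} quantifies over all $r,r' \in R$ and reads $\sum_{c \in C} \sigma(r,c)(u_1(r,c)-u_1(r',c)) \geq 0$, which is precisely the row player's correlated-equilibrium incentive constraint (using, as the remark after the definition notes, the unnormalized form). Condition~(1) is verbatim the column player's correlated-equilibrium constraint. Hence a correlated profile has no undetectable beneficial deviations if and only if it is a correlated equilibrium. Since a SESLO maximizes the row player's utility over exactly this feasible set, it coincides with a row-player-optimal correlated equilibrium.

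Next I would handle $|\mathcal{S}|=|R|$. Now every SIS is a singleton, so the only pair $r,r'$ with $r' \in S = \{r\}$ is $r'=r$, and condition~(2) reduces to $0 \geq 0$, i.e., it is vacuous. Thus the feasible set is carved out by condition~(1) alone, which is the requirement that, for each recommended column (signal), the column player weakly prefers to follow it. By the appendix reduction that a signal may without loss of generality be taken to be a recommended action, this is exactly the set of leader-committed signaling schemes to which the column player best responds; maximizing the row player's utility over this set is by definition a Stackelberg equilibrium with signaling. The parenthetical claim that this is no better than Stackelberg without signaling then follows immediately from the known result that, in two-player normal-form games, the ability to signal buys the leader nothing~\citep{Conitzer11:Commitment}.

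I expect the routine direction---matching conditions~(1) and~(2) to the textbook correlated-equilibrium constraints---to be entirely mechanical. The one step requiring care is the identification in the second case: one must invoke the appendix normalization (signals are recommended actions) to see that condition~(1) is exactly the column player's best-response condition in the signaling model, rather than some weaker or stronger variant, after which the optimization matches the definition of a Stackelberg equilibrium with signaling verbatim.
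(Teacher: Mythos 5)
Your proposal is correct, and it matches the paper's intent: the paper states Proposition~\ref{prop:SESLO} without proof, treating it as immediate from Definition~\ref{def:undetectable2}, and your argument (condition~(2) becomes the row player's correlated-equilibrium constraint when $|\mathcal{S}|=1$, and becomes vacuous when all SISes are singletons, leaving exactly the follower's obedience constraints from the signaling Stackelberg model) is precisely the routine verification being taken for granted, with the parenthetical claim attributed, as in the paper's own text, to \citet{Conitzer11:Commitment}.
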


\section{Computational Results}

It turns out that with signaling, we do not face hardness.  The linear
program in Figure~\ref{fi:LP} can be used to compute a SESLO.  It is a
simple modification of the standard linear program for correlated
equilibrium, the differences being that (1) for the row player, only
deviations within a SIS are considered, and (2) there is an objective of
maximizing the row player's utility.  The special case where
$|\mathcal{S}|=|R|$ has no constraints for the row player, and that special
case of the linear program has previously been described
by~\citet{Conitzer11:Commitment}.

\begin{figure}
\begin{center}
\begin{tabular}{|r l|}
\hline
{\bf maximize} \ & $\sum_{r \in R, c \in C} u_1(r,c)p(r,c)$\\
($\forall \ S \in \mathcal{S}$) ($\forall \ r,r' \in S$) \  &
$\sum_{c \in C} (u_1(r',c)-u_1(r,c))p(r,c) \leq 0$\\
($\forall \ c,c' \in C$) \  &
$\sum_{r \in R} (u_2(r,c')-u_2(r,c))p(r,c) \leq 0$\\
 & $\sum_{r \in R, c \in C} p(r,c)=1$\\
$(\forall \ r \in R, c \in C)$ \ & $p(r,c) \geq 0$\\ \hline
\end{tabular}
\end{center}
\caption{Linear program for computing a SESLO.}
\label{fi:LP}
\end{figure}

\begin{theorem}
A SESLO can be computed in polynomial time.
\label{th:easy}
\end{theorem}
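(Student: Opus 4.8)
The plan is to exhibit an explicit linear program whose feasible region captures exactly the correlated profiles with no undetectable beneficial deviations, and whose optimum is the row player's SESLO utility, and then appeal to the fact that linear programs of polynomial size can be solved in polynomial time. The linear program is already presented in Figure~\ref{fi:LP}, so the work is to verify that it is correct and of polynomial size.

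First I would argue that the feasible region of the linear program is precisely the set of correlated profiles satisfying Definition~\ref{def:undetectable2}. The variables $p(r,c)$ range over all pairs $(r,c) \in R \times C$, and the constraint $\sum_{r,c} p(r,c) = 1$ together with $p(r,c) \geq 0$ ensures that $p$ is a probability distribution over outcomes, i.e., a correlated profile $\sigma$. The column-player constraints $\sum_{r \in R}(u_2(r,c')-u_2(r,c))p(r,c) \leq 0$ are exactly condition~(1) of Definition~\ref{def:undetectable2}, rewritten in the equivalent form $\sum_{r} \sigma(r,c)(u_2(r,c)-u_2(r,c')) \geq 0$; crucially, as the paper already notes, we may drop the requirement $\sum_r \sigma(r,c) > 0$ because the inequality holds vacuously otherwise, which is what lets us write the constraint as a single linear inequality for every pair $(c,c')$. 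Likewise, the row-player constraints, ranging over each SIS $S$ and each pair $r,r' \in S$, are exactly condition~(2) of Definition~\ref{def:undetectable2}. Hence $p$ is feasible if and only if the corresponding $\sigma$ has no undetectable beneficial deviations.

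Next I would observe that the objective $\sum_{r,c} u_1(r,c)p(r,c)$ is precisely the row player's expected utility under $\sigma$, so maximizing it over the feasible region yields exactly a SESLO and its value. Since the definition of SESLO asks for a correlated profile with no undetectable beneficial deviations that maximizes the row player's utility, a maximizer of the linear program is a SESLO, and such a maximizer is attained because the feasible region is a nonempty (it always contains, e.g., a Nash equilibrium of the game viewed as a degenerate correlated profile, or more simply any SELO/pure profile satisfying the conditions) compact polytope and the objective is linear. Finally, I would note that the number of variables is $|R| \cdot |C|$ and the number of constraints is $\sum_{S \in \mathcal{S}} |S|^2 + |C|^2 + 1 + |R||C|$, all polynomial in the size of the game, so the linear program can be solved in polynomial time by any standard polynomial-time linear programming algorithm.

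I do not expect any serious obstacle here; the entire content is the correctness of the translation between the inequalities of Definition~\ref{def:undetectable2} and the constraints of Figure~\ref{fi:LP}. The only point that requires a moment's care---and which the paper has already isolated in its remark following Definition~\ref{def:undetectable2}---is the justification for replacing the conditional-probability ratios by the raw joint probabilities $p(r,c)$ and for dropping the positivity guards $\sum_r \sigma(r,c) > 0$ and $\sum_c \sigma(r,c) > 0$; once that normalization issue is dispatched, the linear program is a verbatim encoding of the feasibility conditions and the result follows immediately.
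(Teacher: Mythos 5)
Your proposal is correct and takes essentially the same approach as the paper: the paper's proof of Theorem~\ref{th:easy} consists precisely of the linear program in Figure~\ref{fi:LP} together with the observation (made after Definition~\ref{def:undetectable2}) that the positivity guards can be dropped, so its feasible region is exactly the set of correlated profiles with no undetectable beneficial deviations. Your writeup merely fills in the same details explicitly (nonemptiness via a Nash/correlated equilibrium, polynomial size, attainment of the optimum), so there is nothing to flag.
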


\section{Weaker Signaling Models}
\label{se:weaker}

Another model of signaling would be to have the row player absorb the role
of the mediator as well.  That is, instead of adding an independent entity,
the row player would simply generate and send the signal to the column
player.  This is how the signaling model with full
commitment~\citep{Conitzer11:Commitment} is typically presented, and in the
context of full commitment it makes no difference whether it is the row
player or another independent entity that sends the signals.  However, with
partial commitment, there is a significant difference.  Namely, if the row
player knows which signal the column player will receive, she could base
her own choice of action on this.  Now, if she does so in a way that
changes the distribution over SISes conditional on that column player
signal, this would over time be detected.  Hence, if every SIS consists of
a single row---the full commitment case---then the row player cannot take
advantage of knowing the signal to the column player without being
detected.  However, she can base her choice {\em within} a SIS on the
signal to the column player without ever being detected.  In the most
recent example above, this plays out as follows if we investigate whether
the same correlated profile is still an equilibrium.  By knowing the signal
($A$, $B$, or $C$) to the column player, the row player could, at first
glance, deviate and always obtain utility $12$.  However, if she does not
play $e$ one third of the time, this would be detectable.  Still, if she
continues to play $e$ one third of the time conditional on each of the
signals $A$, $B$, and $C$, but with the remaining $2/3$ plays whichever
action gives her $12$, this would be undetectable and give her an expected
utility of $(1/3) \cdot 7 + (2/3) \cdot 12 = 31/3$, greater than the $19/3$
for not deviating.  Hence this is no longer an equilibrium---the column
player will realize that it is not in his best interest to follow the
signals---and the row player cannot get as high of a utility anymore in
equilibrium.  That is, she benefits from being able to keep herself from
knowing the signal to the column player.  One may wonder whether restricted
signaling from the row player to the column player (where the signal is
perhaps not even a specific column) can still be helpful.  But in fact, as
the next proposition shows, it cannot.  (This was previously established in
the full commitment case~\citep{Conitzer11:Commitment}.)

\begin{proposition}
If the row player necessarily knows the signal sent to the column player,
then the row player can do no better than playing a SELO.
\end{proposition}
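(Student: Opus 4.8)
The plan is to show that any scheme available to the row player in this weaker model decomposes into a convex combination of uncorrelated profiles that individually have no undetectable beneficial deviations, and then to invoke the definition of SELO as the utility-maximizer among such profiles. Since any SELO is trivially feasible here (the row player can send independent signals), this yields equality, but the only non-trivial direction is the upper bound, so I focus on that.

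First I would formalize the model. By a revelation-principle argument analogous to the one used for the trusted mediator model (carried out in the appendix), I would argue that without loss of generality the signal the column player receives is a recommended action $c \in C$, which in equilibrium he follows. Because the row player herself generates this signal, she knows $c$ before she acts. She cannot alter the distribution over SISes conditional on $c$ without eventually being detected, but any deviation \emph{within} a SIS is undetectable; hence the utility-maximizing row player, knowing $c$, plays a within-SIS best response to $c$ in each SIS that she plays.

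Next I would fix an arbitrary such scheme, let $\sigma_2(c)$ be the probability that the column player is told to play $c$, and let $\hat\sigma_c$ be the resulting distribution over the row player's actions conditional on the signal $c$ (after her undetectable within-SIS best responses). Two facts then hold. First, since the column player follows $c$ in equilibrium, $c$ is a best response to $\hat\sigma_c$, i.e.\ $u_2(\hat\sigma_c,c') \le u_2(\hat\sigma_c,c)$ for all $c'$. Second, by construction $\hat\sigma_c$ places probability only on rows that are best responses, within their own SIS, to the pure column $c$. By the simple characterization of profiles with no undetectable beneficial deviations (conditions (1) and (2)), these two facts say exactly that the uncorrelated profile $(\hat\sigma_c, c)$, with $c$ read as a pure column strategy, has no undetectable beneficial deviations. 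Hence $u_1(\hat\sigma_c, c)$ is at most the SELO value. Averaging, the row player's overall utility is $\sum_{c} \sigma_2(c)\, u_1(\hat\sigma_c, c)$, a convex combination of quantities each bounded by the SELO value, and so is itself at most the SELO value.

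I expect the main obstacle to be the formalization rather than the arithmetic, and specifically pinning down \emph{what distribution the column player best-responds to}. The crucial point is that it is the post-deviation distribution $\hat\sigma_c$, not the row player's nominally recommended play, because the column player anticipates that the row player will exploit her knowledge of $c$ to best-respond within each SIS. Once one sees that this anticipated $\hat\sigma_c$ is automatically supported on within-SIS best responses to $c$, the conditional profile $(\hat\sigma_c, c)$ is forced to satisfy both conditions for having no undetectable beneficial deviations, and the collapse to SELO follows purely by averaging. Handling the case where the column player has multiple best responses to a signal (or where signals are coarser than single actions) is subsumed by the revelation-principle reduction, since relabeling each signal by the induced recommended action absorbs any such mixing.
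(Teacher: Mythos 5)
Your core averaging step---conditional on a commonly known signal, play forms an uncorrelated profile with no undetectable beneficial deviations, and the overall utility is a convex combination of such profiles' values---is exactly the paper's argument. The genuine gap is the revelation-principle reduction you use to assume the signal is a pure recommended column $c$. That reduction fails in precisely this model. If in the original scheme the column player mixes according to $\sigma_2(\cdot \mid m)$ upon receiving signal $m$, then relabeling the signal as the realized recommendation $c$ changes what the \emph{row} player knows: by the hypothesis of the proposition she sees the signal, so after relabeling she knows the column player's exact action rather than only the distribution $\sigma_2(\cdot \mid m)$. This strictly enlarges her set of undetectable deviations---a row that is a within-SIS best response to the mixture $\sigma_2(\cdot \mid m)$ need not be a within-SIS best response to each pure $c$ in its support---so an equilibrium of the general scheme need not survive the relabeling, and your bound, established only for pure-recommendation schemes, does not cover the general schemes the proposition is explicitly about (the paper introduces it for signals that are ``perhaps not even a specific column''). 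Note that the appendix's revelation principle is proved for the trusted mediator model, where the row player does \emph{not} observe the column player's signal; the ``new twist'' flagged there---that the relabeling must not enlarge the set of undetectable deviations---is exactly the condition that breaks here. The example in Section~\ref{se:weaker} shows the phenomenon concretely: once the row player knows the column player's realized recommendation, she acquires profitable undetectable deviations that destroy the equilibrium. Indeed, a SELO itself generally cannot be implemented as a pure-recommendation scheme for this very reason, so your ``WLOG'' discards value rather than preserving it.

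The fix is to delete the reduction and run your conditional argument on the raw signal $m$ directly, which is what the paper does. Since $m$ is common knowledge between the players and there is no further correlating device, conditional on $m$ the players play independent mixed strategies $\sigma_1(\cdot \mid m)$ and $\sigma_2(\cdot \mid m)$. The column player's equilibrium condition forces the support of $\sigma_2(\cdot \mid m)$ onto best responses to $\sigma_1(\cdot \mid m)$, and the row player's undetectability constraint forces the support of $\sigma_1(\cdot \mid m)$ onto within-SIS best responses to $\sigma_2(\cdot \mid m)$; these are exactly conditions (1) and (2) of the paper's characterization of uncorrelated profiles with no undetectable beneficial deviations. Hence each conditional profile earns the row player at most the SELO value, and your convex-combination step then finishes the proof. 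In short: your ``two facts'' are correct, but they must be stated with $\sigma_2(\cdot \mid m)$ in place of the pure column $c$; with that change the argument is sound and coincides with the paper's.
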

\begin{proof}
The signal sent to the column player is common knowledge between the
players.  Hence, in equilibrium, conditional on each signal, they must play
an uncorrelated profile with no undetectable beneficial deviations.  But then the
(correlated) profile as a whole is a convex combination of such
uncorrelated profiles, and hence the row player's utility for it cannot
exceed her utility for the best one among them.
\end{proof}

Hence, to gain any benefit from signaling at all, the row player must
restrict herself from knowing the signal to the column player.

Next, we investigate a more subtle change to the signaling model, namely
the variant where we still have a trusted mediator but the signal to the
row player is not publicly revealed afterwards.  It is perhaps not
immediately clear that this variant is significantly different, because the
column player can still observe the distribution over the SISes conditional
on each signal that he gets.  Hence, it may be the case that the row player
cannot usefully deviate from one SIS to another without this being
detectable, leaving the row player only with the option of deviating within
a SIS, which was already present in the original trusted mediator model.
However, this is not true: the following example shows that deviations
across SISes can be significant when the signal to the row player is not
publicly revealed afterwards.

\noindent {\bf Example.} Consider the following game:
\begin{center}
\begin{tabular}{r@{\hspace{1em}}|@{\hspace{1em}}c@{\hspace{1em}}|@{\hspace{1em}}c@{\hspace{1em}}|@{\hspace{1em}}c@{\hspace{1em}}|@{\hspace{1em}}c@{\hspace{1em}}|}
    & $A$ & $B$ & $C$ & $D$\\ \hline
$a$ & 1,0 & 4,0 & 1,1 & 1,1\\ \hline
$b$ & 1,1 & 1,0 & 4,0 & 1,1 \\ \hline
$c$ & 4,0 & 1,1 & 1,0 & 1,1\\ \hline
$d$ & 1,1 & 0,1 & 0,0 & 0,0 \\ \hline
$e$ & 0,0 & 1,1 & 0,1 & 0,0 \\ \hline
$f$ & 0,1 & 0,0 & 1,1 & 0,0 \\ \hline
\end{tabular}\\
\end{center}
Suppose $\mathcal{S}=\{\{a,b,c\},\{d,e,f\}\}$.
It can be shown\footnote{Again, this was verified using the linear program
  from Figure~\ref{fi:LP}.} that a SESLO is
$$((1/9)(a,B),(1/9)(a,C),(1/9)(b,A),(1/9)(b,C),(1/9)(c,A),(1/9)(c,B),$$ $$(1/9)(d,A),(1/9)(e,B),(1/9)(f,C))$$ resulting in an expected utility of $2$ for the row player. 
(A rough intuition is that the upper left corner of the game is Shapley's
game, and the row player would like to be able to play the profile
corresponding to the correlated equilibrium in Shapley's game.  However,
playing this by itself is not feasible because the column player would then
deviate to $D$.  But this is fixed by mixing in some of the lower left
corner.)  However, now consider the following deviation for the row player.
When recommended to play $d$, $e$, or $f$, instead play $c$, $a$, or $b$,
respectively, resulting in a payoff of $4$ instead of $1$; this happens
$1/3$ of the time overall.  When recommended to play $a$, $b$, or $c$,
instead do the following: with probability $1/2$ play as suggested but with
probability $1/2$ play $e$, $f$, or $d$, respectively, resulting in an
expected payoff of $0.5$ instead of $2.5$; this (being recommended $a$,
$b$, or $c$, and with probability $1/2$ playing differently) happens $1/3$
of the time.  So overall, the expected gain from this deviation is $(1/3)
\cdot 3 + (1/3) \cdot (-2) = 1/3$.  Moreover, conditional on the
recommendation being any one of $A$, $B$, and $C$, the column player still
observes the SIS $\{a,b,c\}$ two thirds of the time and the SIS $\{d,e,f\}$
one third of the time.  Hence the deviation is undetectable, unless it is
revealed after the fact what the row player's signal was.

\section{The Value of More Commitment Power}

More strategies being distinguishable corresponds to more commitment power
for the row player.  As commitment power (in this particular sense)
increases, does the utility the row player can obtain always increase
gradually?  (Note that it can never {\em decrease} the row player's
utility, because all it will do is remove constraints in the optimization.)
If she has close to full commitment power, does this guarantee her most of
the benefit of full commitment power?  Is some nontrivial minimal amount of
commitment power necessary to obtain much benefit from it?  The next two
results demonstrate that the answer to all these questions is ``no'': there
can be big jumps in the utility that the row player can obtain, both on the
side close to full commitment power (Proposition~\ref{prop:closetofull})
and on the side close to no commitment power
(Proposition~\ref{prop:closetonone}).  (For an earlier study comparing the
value of being able to commit completely to that of not being able to
commit at all, see~\citet{Letchford10:Value}; for one assessing the value
of correlation without commitment, see~\citet{Ashlagi08:Value}.)

\begin{proposition}
For any $\epsilon>0$ and any $n>1$, there exists an $n \times (n+1)$ game
with all payoffs in $[0,1]$ such that if $|\mathcal{S}|=|R|=n$, the row
player can obtain utility $1-\epsilon$ (even without signaling), but for any $\mathcal{S}$ with
$|\mathcal{S}|<|R|=n$, the row player can obtain utility at most $\epsilon$
(even with signaling).
\label{prop:closetofull}
\end{proposition}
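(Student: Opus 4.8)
The plan is to exhibit, for each $n>1$ and $\epsilon>0$, an explicit $n \times (n+1)$ game (with payoffs scaled into $[0,1]$ and tuned by a small parameter $\delta=\delta(n,\epsilon)$) that is symmetric under simultaneous relabeling of the $n$ rows and the first $n$ columns, together with a distinguished ``punishment'' column $c_{n+1}$. The design principle is already visible in the $n=2$ case, which I would present first as the core gadget: take columns $P,Q,R$ with $u_1(u,P)=1$, $u_1(d,P)=1-\delta$, and $u_1=0$ on every other cell, and column payoffs $u_2(u,P)=0$, $u_2(d,P)=1$, $u_2(u,Q)=1$, $u_2(d,Q)=0$, $u_2(\cdot,R)=1/2-\nu$ for a small $\nu>0$. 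The features to replicate in general are: (i) a ``good'' column on which the row earns essentially $1$, but on which distinct rows earn \emph{strictly} different payoffs, so that committing to it requires the row to mix across rows that are not mutually within-SIS best responses; and (ii) the column is willing to play the good column only because of that mix, so that destroying the mix (as a merge does) makes the punishment column $c_{n+1}$ the column's best response, where the row earns $0$.

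For the upper bound (the case $|\mathcal{S}|=|R|=n$), I would invoke Proposition~\ref{prop:SELO}: with all rows distinguishable a SELO is exactly a Stackelberg equilibrium, so it suffices to exhibit a single mixed commitment $\sigma_1$ whose unique (tie-broken) best response for the column is the good column, and to compute that the row's resulting payoff is at least $1-\epsilon$. In the gadget this is $\sigma_1=(1/2,1/2)$: the column is indifferent between $P$ and $Q$ at value $1/2>1/2-\nu$, tie-breaking selects $P$, and the row earns $(1/2)(1)+(1/2)(1-\delta)=1-\delta/2\ge 1-\epsilon$ once $\delta\le 2\epsilon$. The general construction uses an analogous full-support commitment that makes the column indifferent across all good columns while keeping the punishment column strictly dominated.

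For the lower bound (any $\mathcal{S}$ with $|\mathcal{S}|<n$), I would first reduce to a single merged pair: if $|\mathcal{S}|<n$ then some SIS contains two rows $r,r'$, and the within-SIS constraints of Definition~\ref{def:undetectable2} for this pair are present in the linear program of Figure~\ref{fi:LP}; any additional SISes only add constraints and cannot increase the row's attainable value. Since the game is symmetric under relabeling, it then suffices to analyze one fixed pair. The heart of the argument is a two-step constraint chain applied to the linear program, exactly as in the gadget: the within-SIS inequality $\sum_c (u_1(r',c)-u_1(r,c))\,p(r,c)\le 0$ forces $p(r,\text{good})=0$ for the row $r$ that is the \emph{inferior} member of the pair at the good column (the payoff gap there being a strictly positive multiple of $p(r,\text{good})$), and then, with that weight removed, the column's best-response inequality comparing the good column to $c_{n+1}$ forces the remaining good-column weight to $0$ as well; hence no probability survives on any cell where the row earns a positive payoff, giving SESLO value $\le\epsilon$.

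The main obstacle is to make this collapse occur for \emph{every} single merge simultaneously while keeping the full-commitment value near $1$ for \emph{all} $n$, including small $n$. The natural symmetric ``pursuit''/Shapley-style cores are robust to merging precisely because they leave the row indifferent across the cyclic rows, and moreover cap the commitment value at $1-1/n$, which is useless for small $n$; the construction must therefore break this tension by having the high value come from committing to rows that are strictly within-SIS dominated at the good column (as in the gadget), not from symmetry. The delicate part of the general proof is thus engineering the good columns and the punishment column so that zeroing the probability of \emph{any one} row at \emph{any one} good column (which is what an arbitrary within-SIS constraint produces) suffices to make $c_{n+1}$ the column's best response against the resulting correlated profile; verifying this against all correlated (signaling) profiles, rather than just product strategies, is where the argument must be carried out carefully through the linear program of Figure~\ref{fi:LP}.
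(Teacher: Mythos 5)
Your two design principles are the right ones, and your $2\times 3$ gadget is correct: the within-SIS constraint of the LP in Figure~\ref{fi:LP} forces $p(d,P)=0$, after which the column player's constraint comparing $P$ to $Q$ forces $p(u,P)=0$, so the row player's value collapses to $0$; and the (tie-broken) commitment to $(1/2,1/2)$ yields value $1-\delta/2$ under full observability. But this proves the statement only for $n=2$. For general $n$ the game is never specified: you say the general construction uses ``an analogous full-support commitment,'' and then in your final paragraph you explicitly leave as an unresolved obstacle precisely the property that the proposition requires --- that \emph{every} one of the $\binom{n}{2}$ possible merges simultaneously collapses the SESLO value to at most $\epsilon$, against all correlated profiles, while the full-commitment value stays near $1$. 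Since the statement quantifies over every partition $\mathcal{S}$ with $|\mathcal{S}|<n$, this is the heart of the proposition, not a detail to be ``carried out carefully.'' There is also an internal inconsistency in the plan: you ask for a game symmetric under simultaneous relabeling of the rows and the first $n$ columns, but such symmetry forces the row player's payoff on any ``good'' column to be constant across rows (any relabeling fixing that column permutes the rows), which contradicts your own design principle (i) --- a tension you notice but do not resolve.

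For comparison, the paper closes exactly this gap with two ideas absent from your proposal. First, the rows form a strict dominance \emph{chain}: $u_1(i,j)=i\epsilon/n$ for $j\le n$ and $u_1(i,n+1)=1-(n-i)\epsilon/n$, so row $i'$ strictly dominates row $i$ whenever $i'>i$. Hence \emph{any} merged pair contains a strictly dominated row, and the within-SIS constraint zeroes that row at \emph{every} column (your gadget's weak dominance zeroes $d$ only at $P$, which is why your subsequent step needs delicate engineering). Second, the column player's payoff at column $i$ is $(1+1/n)/2$ against every row except row $i$ (where it is $0$), while column $n+1$ pays him $1/2$ regardless of the row; so once row $i$ is never played, column $i$ pays $(1+1/n)/2>1/2$ against \emph{any} correlated profile on the remaining rows, and the best-response constraints force zero probability on column $n+1$, capping the row player at $\epsilon$. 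This row-independence is what makes the argument uniform over all merges and all signaling profiles at once; it also makes the full-commitment step cleaner, since under the uniform commitment each column $j\le n$ pays $(n^2-1)/(2n^2)<1/2$ strictly, so column $n+1$ is the unique best response and no tie-breaking is needed.
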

\begin{proof}
Let $R = \{1,\ldots,n\}$ and $C=\{1,\ldots,n+1\}$.
Let $u_1(i,j)=i\epsilon/n$ for $j \leq n$, and let $u_1(i,n+1) = 1-(n-i)\epsilon/n$.
Let $u_2(i,j)=(1+1/n)/2$ for $i \neq j$ and $j\leq n$, let $u_2(i,i)=0$ (for $i \leq n$), and let $u_2(i,n+1)=1/2$ for all $i$.

Suppose $|\mathcal{S}|=|R|=n$.  Then, by Proposition~\ref{prop:SELO}, we are in the regular Stackelberg model, and the row player can commit to a uniform strategy, putting probability $1/n$ on each $i$.
As a result the expected utility for the column player for playing some $j \leq n$ is $((n-1)/n)(1+1/n)/2 = (n-1)(n+1)/(2n^2) = (n^2-1)/(2n^2) < 1/2$, so to best-respond he needs to play $n+1$, resulting in a utility for the row player that is greater than $1-(n-1)\epsilon/n > 1-\epsilon$.

On the other hand, suppose that $|\mathcal{S}|<|R|=n$.  Hence there exists
some $S \in \mathcal{S}$ with $i,i'\in S$, $i < i'$. Note that $i'$
strictly dominates $i$, so the row player will never play $i$ in a SELO or
even a SESLO.  But then, the column player can obtain $(1+1/n)/2 > 1/2$ by
playing $i$, and hence will not play $n+1$.  As a result the row player
obtains at most $n\epsilon / n = \epsilon$.
\end{proof}

\begin{proposition}
For any $\epsilon>0$ and any $n>1$, there exists an $n \times (n+1)$ game
with all payoffs in $[0,1]$ such that for any $\mathcal{S}$ with
$|\mathcal{S}|>1$, the row player can obtain utility $1-\epsilon$ (even without
signaling), but if $|\mathcal{S}|=1$, the row player can only obtain
utility $0$ (even with signaling).
\label{prop:closetonone}
\end{proposition}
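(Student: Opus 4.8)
The plan is to construct a game dual to the one in the proof of Proposition~\ref{prop:closetofull}, arranged so that the jump in the row player's utility occurs between one SIS and two SISes rather than between partial and full commitment. Let $R=\{1,\dots,n\}$ and $C=\{1,\dots,n+1\}$, and designate row $n$ as a \emph{universal undercutter}. Concretely, for $i\leq n-1$ set $u_1(i,i)=1-\epsilon$ and $u_1(n,i)=1$, with all remaining row payoffs (in particular every entry in columns $n$ and $n+1$) equal to $0$; for the column player set $u_2(i,i)=1$ for $i\leq n-1$, $u_2(r,n+1)=1/2$ for every $r$, and all other entries $0$. The design intent is that committing to a row $i\leq n-1$ induces the column player to play column $i$ and yields the reward $1-\epsilon$, but against column $i$ the undercutter row $n$ does strictly better (payoff $1$), so without commitment this reward cannot be sustained.

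For the direction $|\mathcal{S}|>1$ the only combinatorial fact I need is that if $\mathcal{S}$ has at least two blocks, then some reward row $i\leq n-1$ lies in a different block from row $n$ (otherwise all rows would share row $n$'s block). Fix such an $i$ and let the row player commit to the pure strategy $i$ while the column player plays column $i$. The column player is best responding, since column $i$ gives him $1$, beating the $1/2$ of column $n+1$ and the $0$ of every other column. The only row that beats row $i$ against column $i$ is the undercutter row $n$, which lies in a different SIS; hence row $i$ is a within-SIS best response and the profile has no undetectable beneficial deviations. This gives utility $1-\epsilon$, establishing the claim for every nontrivial partition, and without any signaling.

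For the direction $|\mathcal{S}|=1$ I would invoke Proposition~\ref{prop:SESLO} to reduce to showing that every correlated equilibrium gives the row player utility $0$, i.e.\ places zero probability on all reward and undercut cells. The key step is the row player's within-SIS incentive constraint comparing a reward row $i\leq n-1$ to the undercutter row $n$: since $u_1(i,c)-u_1(n,c)\leq 0$ for every column $c$ (it equals $-\epsilon$ at $c=i$, $-1$ at the other reward columns, and $0$ elsewhere), the constraint $\sum_c p(i,c)(u_1(i,c)-u_1(n,c))\geq 0$ is a sum of nonpositive terms and therefore forces $p(i,c)=0$ for every reward column $c\leq n-1$, in particular $p(i,i)=0$. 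Once the reward cells carry no probability, the column player's incentive constraint (comparing column $i$ to column $n+1$) shows that no reward column $i\leq n-1$ can be played at all, since playing it would require conditional probability at least $1/2$ on row $i$, which is now $0$; hence the undercut cells are unused as well, all remaining mass sits in columns $n$ and $n+1$, and the row player's payoff is $0$.

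I expect the main obstacle to be precisely this last direction: ruling out \emph{correlated} (not merely Nash) equilibria that extract positive value, because cyclic ``undercutting'' is exactly the phenomenon that lets correlated equilibria outperform Nash equilibria, as in Shapley's game. The construction is engineered to forestall this by routing every reward through a single undercutter row $n$, so that each reward row is beaten by row $n$ \emph{simultaneously at every reward column}; this turns the relevant incentive constraint into a sum of nonpositive terms that annihilates the reward cells outright, rather than leaving Shapley-type slack that a clever correlation could exploit. The remaining checks are routine: all payoffs lie in $[0,1]$, a zero-utility equilibrium exists (row $n$ against column $n+1$), and the elementary fact that some reward row is separated from row $n$ whenever $|\mathcal{S}|>1$.
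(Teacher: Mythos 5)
Your proof is correct, but it rests on a genuinely different construction and a different key argument from the paper's. The paper uses a hide-and-seek game: the row player gets $1$ for matching the column player's choice ($u_1(i,i)=1$), $1-\epsilon$ for mismatching among columns $j\le n$, and $0$ in column $n+1$, while the column player gets $1$ on mismatches, $0$ on matches, and $(n-1/2)/n$ in the outside column. There, for $|\mathcal{S}|>1$ the row player commits to put zero probability on an entire SIS, hence on some row $i$, making column $i$ a best response worth $1-\epsilon$ to her; for $|\mathcal{S}|=1$ the paper argues that in any Nash or correlated equilibrium the row player best-responds by matching with conditional probability at least $1/n$, so every column $j\le n$ yields the column player at most $(n-1)/n<(n-1/2)/n$ in aggregate and he must retreat to $n+1$. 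Your game instead routes every positive row payoff through diagonal reward cells that are all beaten by a single undercutter row $n$, which weakly dominates each reward row with strict inequality exactly on the reward columns. This makes the correlated-equilibrium direction purely mechanical: the within-SIS constraint of Definition~\ref{def:undetectable2} comparing row $i$ to row $n$ is a sum of nonpositive terms, so it forces $p(i,c)=0$ for all $c\le n-1$ outright, and the column player's constraint against column $n+1$ then eliminates the reward columns entirely. That term-by-term argument buys you a tighter, more elementary verification than the paper's conditioning argument (``conditional on receiving any signal that leaves open the possibility that the column player plays some $j \le n$\ldots''), which requires aggregating per-signal matching probabilities across the column player's recommendations; on the other hand, the paper's game contains no dominance relations among rows, so it shows the collapse to utility $0$ does not hinge on a weakly dominated row surviving in equilibrium, which is the structural feature your construction leans on. Both proofs handle arbitrary partitions with $|\mathcal{S}|>1$ correctly---yours by selecting a reward row separated from the undercutter, the paper's by zeroing out a whole SIS---and both exhibit the required zero-utility equilibrium in the $|\mathcal{S}|=1$ case.
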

\begin{proof}
Let $R = \{1,\ldots,n\}$ and $C=\{1,\ldots,n+1\}$.
Let $u_1(i,j)=1-\epsilon$ for $i \neq j$ and $j \leq n$, let $u_1(i,i)=1$ (for $i \leq n$),
and let $u_1(i,n+1) = 0$ for all $i$.
Let $u_2(i,j)=1$ for $i \neq j$ and $j\leq n$, let $u_2(i,i)=0$ (for $i \leq n$), and let $u_2(i,n+1)=(n-1/2)/n$ for all $i$.

Suppose $|\mathcal{S}|>1$.  Then, the row player can commit to put $0$
probability on some $S \in \mathcal{S}$, and therefore, $0$ probability on
some $i$.  Hence, this $i$ is a best response for the column player, and
the row player obtains $1-\epsilon$.  (The row player may be able to do
better yet, but this is a feasible solution.)

On the other hand, suppose $|\mathcal{S}|=1$. By
Proposition~\ref{prop:SELO}, the row player can only obtain the utility of
the best Nash equilibrium of the game for her (or, in the case with
signaling, the utility of the best correlated equilibrium, by
Proposition~\ref{prop:SESLO}).  We now show that in every Nash equilibrium
(or even correlated equilibrium) of the game, the column player puts all
his probability on $n+1$, from which the result follows immediately.  For
suppose the column player sometimes plays some $j \leq n$.  Then, for the
row player to best-respond, she has to maximize the probability of choosing
the same $j$ (conditional on the column player playing some $j \leq n$).
(Or, more precisely in the case of correlated equilibrium, conditional on
receiving any signal that leaves open the possibility that the column
player plays some $j \leq n$, the row player has to maximize the
probability of picking the same $j$.)  She can always make this probability
at least $1/n$ by choosing uniformly at random.  Hence, the column player's
expected utility (conditional on playing $j \leq n$) is at most $(n-1)/n$.
But then $n+1$ is a strictly better response, so we do not have a Nash (or
correlated) equilibrium.
\end{proof}

Of course, the above two results are extreme cases.  Can we say anything about
what happens ``typically''?  To illustrate this, we present the results for
randomly generated games in Figure~\ref{fi:experiment}.  For each data
point, 1000 games of size $m \times n$  were generated by choosing utilities
uniformly at random.  The rows were then evenly (round-robin) spread over a
given number of SISes, and the game was solved using the GNU Linear
Programming Kit (GLPK) with the linear program from Figure~\ref{fi:LP}.
The leftmost points ($1$ SIS) correspond to no commitment power (best
correlated equilibrium), and the rightmost points (at least when the number
of SISes is at least $m$)
correspond to full commitment power (best Stackelberg mixed strategy).
\begin{figure}[ht!]
\centering
\includegraphics[]{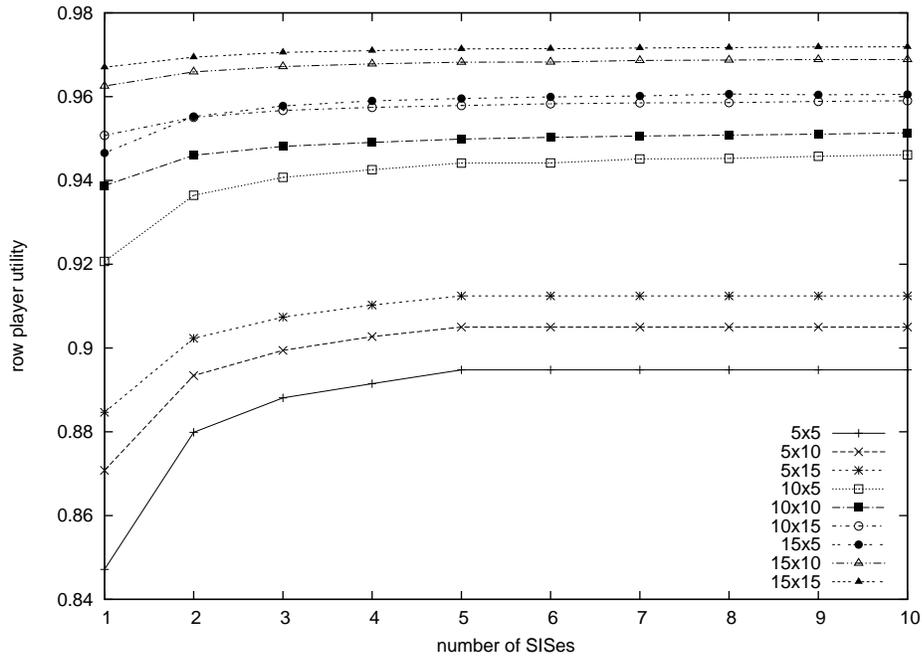}
\caption{Utility obtained by the row player as a function of commitment power
  (number of SISes), for various sizes of $m \times n$ games.}
\label{fi:experiment}
\end{figure}
From this experiment, we can observe that most of the value of commitment is
already obtained when moving from one SIS to two.

\section{Conclusion}

The model of the defender being able to commit to a mixed strategy has been
popular in security games, motivated by the idea that the attacker can
learn the distribution over time.  This model has previously been
questioned, and limited observability has previously been studied in
various senses, including the attacker obtaining only a limited number of
observations~\citep{Pita10:Robust,An12:Security} as well as the attacker
observing (perfectly) only with some
probability~\citep{Korzhyk11:Stackelberg,Korzhyk11:Solving}.  Here, we
considered a different type of limited observability, where certain courses
of action are distinguishable from each other, but others are not.  As a
result, the row player's pure strategies partition into SISes, and she can
commit to a distribution over SISes but not to how she plays within each
SIS.  We showed that it is NP-hard to compute a Stackelberg equilibrium
with limited observation in this context, even when the SISes are small
(Theorem~\ref{th:hard}).  We then introduced a modified model with
signaling and showed that in it, Stackelberg equilibria can be computed in
polynomial time (Theorem~\ref{th:easy}).  We also showed that weaker
signaling models are technically problematic or without power
(Section~\ref{se:weaker}).  Finally, we showed that the cost of introducing
a bit of additional unobservability can be large both when close to full
observability (Proposition~\ref{prop:closetofull}) and close to no
observability (Proposition~\ref{prop:closetonone}); however, in
simulations, introducing a little bit of observability already gives most
of the value of full observability.

Future research may be devoted to the following questions.  Are there
algorithms for computing a SELO that are efficient for special cases of the
problem or that run fast on ``typical'' games?  Another direction for
future work concerns learning in games, which is a topic that has been
thoroughly studied in the simultaneous-move case (see, e.g.,~\citet{Fudenberg98:Theory}), but also already
to some extent in the mixed-strategy commitment case~\citep{Letchford09:Learning,Balcan15:Commitment}.  A model of
learning in games with partial commitment needs to generalize models for
both of these cases.  Finally, can we mathematically prove what is
suggested by the experiment in Figure~\ref{fi:experiment}, namely that in
random games most of the value of commitment is already obtained with only
two SISes?

\section*{Acknowledgments}
I am thankful for support from ARO under grants
W911NF-12-1-0550 and W911NF-11-1-0332, NSF under
awards IIS-1527434, IIS-0953756, CCF-1101659, and CCF-1337215, and a
Guggenheim Fellowship.

\bibliographystyle{named}
\bibliography{/usr/project/conitzer/vccollab/references.bib}

\appendix
\section{Signals Are Actions without Loss of Generality}

In this appendix, we prove that it is without loss of generality to assume
that signals are actions under the trusted mediator model (Definition~\ref{def:trusted}).

\begin{proposition}
Under the trusted mediator model, without loss of generality each signal to
a player is an action that that player is incentivized to follow.
\end{proposition}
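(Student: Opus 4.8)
The plan is to follow the standard revelation-principle argument for correlated equilibrium and adapt it to the SIS structure. Suppose the row player has designed a trusted mediator with some arbitrary signal spaces $\Sigma_1$ (for the row player) and $\Sigma_2$ (for the column player), a joint distribution over $\Sigma_1 \times \Sigma_2$, and behavioral rules specifying, for each signal $s_1 \in \Sigma_1$, a distribution over rows that the row player takes, and for each $s_2 \in \Sigma_2$, a distribution over columns the column player takes. I would first argue that we may assume both players follow their signals deterministically and the signals are actions, by the usual collapsing trick: fold each player's post-signal randomization into the mediator. That is, define a new mediator whose signal to each player is precisely the realized action, drawn according to the composition of the original signal distribution with the players' behavioral rules. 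This induces exactly the same joint distribution $\sigma$ over $R \times C$.

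The substance to check is that incentive compatibility is preserved \emph{under the SIS constraint}, which is the one place this differs from ordinary correlated equilibrium. Concretely, I would verify that if no player could profitably deviate in the original (rich-signal) mediator—where, crucially, the row player's deviations are restricted to those that preserve the conditional distribution over SISes given her signal (since cross-SIS deviations are observable and would destroy her reputation)—then no player can profitably deviate in the action-recommendation mediator either. For the column player this is the textbook argument: any deviation available after receiving an action recommendation was already available (as a garbling) after receiving the original signal, so the original incentive constraints imply the new ones, giving condition (1) of Definition~\ref{def:undetectable2}. For the row player, I would check the analogous statement within each SIS: a within-SIS deviation conditional on an action recommendation $r$ corresponds to a within-SIS deviation in the original model conditional on the signals that led to $r$, so the original within-SIS incentive constraints imply condition (2).

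The main obstacle I expect is making the ``without loss of generality'' precise in the face of the detectability requirement, rather than the routine distribution-matching. In plain correlated equilibrium any relabeling or merging of signals is harmless, but here the row player's feasible deviations are exactly the undetectable ones, and one must confirm that collapsing signals into actions does not \emph{enlarge} her set of undetectable deviations in a way that would make the revelation-principle direction fail. The key observation is that detectability depends only on the induced distribution over (signal-to-row-player, SIS-played) pairs that the column player observes over time, and since the collapsed mediator reveals the row player's action-signal afterward, the column player's observable statistic is at least as fine; hence any deviation undetectable in the collapsed model was undetectable in the original model, so the direction we need (original IC $\Rightarrow$ collapsed IC) goes through. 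I would conclude by noting the converse direction is immediate—an action-recommendation mediator is a special case of a general-signal mediator—so the two models yield the same achievable row-player utility, establishing the claim.
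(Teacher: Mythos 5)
Your proposal is correct and follows essentially the same route as the paper: the standard revelation-principle collapsing of post-signal randomization into the mediator, plus the crucial extra check that moving to action recommendations does not enlarge the row player's set of undetectable deviations, established by simulating any within-SIS deviation of the collapsed mechanism as an undetectable (SIS-distribution-preserving) deviation under the original mechanism. The only cosmetic difference is that the paper phrases this last step directly as the simulation argument (draw $a_i$ from $\sigma_i(\cdot|m_i)$, then switch within the SIS), whereas your appeal to the column player's observable statistic being ``at least as fine'' is a looser shorthand for the same point.
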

\begin{proof}
  The proof strategy is the same as in the usual correlated equilibrium
  case: if for some mediator signal, a player is supposed to randomize over
  actions, then the mediator can simply do this randomization on behalf of
  the player.  Specifically, suppose more generally that there is a set of
  signals $M_i$ for each player $i$, and conditional on receiving $m_i$
  player $i$ is, in equilibrium, supposed to play a distribution
  $\sigma_i(\cdot|m_i)$ over her actions.  Then, instead of sending $m_i$,
  the mediator may as well draw from $\sigma_i(\cdot|m_i)$ and send the
  resulting action $a_ i$.  Such a direct signal in general conveys less
  information to agent $i$ about what signals the other agents received,
  because the randomization from $\sigma_i(\cdot|m_i)$ is not correlated with
  the other signals and so provides no information, and in fact multiple original
  signals $m_i$ can be consistent with the new signal $a_i$, possibly
  resulting in a strict loss of information.  Nevertheless,
  even with this reduced information, player $i$ can obtain the same
  expected utility as before by simply taking the recommended action.
  There is a new twist in the partial commitment model, which is that we
  also need to show that the new signaling mechanism does not increase the
  set of undetectable deviations for the row player.  But it does not:
  again, the row player has less (or the same) information about what the
  others are playing, and moreover now the only undetectable deviation for
  the row player is to always play within the same SIS as the recommended
  action $a_i$, because the recommendation $a_i$ will be revealed after the
  fact.  Any such deviation she could also have simulated under the
  original signaling mechanism, by first drawing $a_i$ from
  $\sigma_i(\cdot|m_i)$ and then changing to another action in the same SIS
  accordingly, which still would be an undetectable deviation under the original
  signaling mechanism.  But we know by assumption that that deviation was
  not beneficial under the original mechanism, because by assumption
  playing the distribution $\sigma_i(\cdot|m_i)$ was in equilibrium.
Hence, any deviation under the new signaling mechanism is also not beneficial.
\end{proof}

\end{document}